\let\csname equation*\endcsname\relax
\let\csname endequation*\endcsname\relax
\newcommand{\p}{\partial}
\newcommand{\dd}{{\rm d}}
\newcounter{mnotecount}[section]
\newcommand{\mnotex}[1]
{\protect{\stepcounter{mnotecount}}$^{\mbox{\footnotesize $\bullet$\themnotecount}}$
\marginpar{
\raggedright\tiny\em
$\!\!\!\!\!\!\,\bullet$\themnotecount: #1} }
\newcommand{\bd}{\begin{definition}}                
\newcommand{\ed}{\end{definition}}                  
\newcommand{\bc}{\begin{corollary}}                 
\newcommand{\ec}{\end{corollary}}                   
\newcommand{\bl}{\begin{lemma}}                     
\newcommand{\el}{\end{lemma}}                       
\newcommand{\bp}{\begin{proposition}}            
\newcommand{\ep}{\end{proposition}}                
\newcommand{\bere}{\begin{remark}}                  
\newcommand{\ere}{\end{remark}}                     
\newcommand{\bt}{\begin{theorem}}
\newcommand{\et}{\end{theorem}}
\newcommand{\bit}{\begin{itemize}}
\newcommand{\eit}{\end{itemize}}
\newtheorem{theorem}{Theorem}
\newtheorem{corollary}[theorem]{Corollary}
\newtheorem{lemma}[theorem]{Lemma}
\newtheorem{proposition}[theorem]{Proposition}
\theoremstyle{definition}
\newtheorem{definition}[theorem]{Definition}
\theoremstyle{remark}
\newtheorem{remark}[theorem]{Remark}
\numberwithin{theorem}{section}
\title{Projective and amplified symmetries in metric-affine theories}
\author{%
	Alfonso Garc\'{\i}a-Parrado$^{\sharp}$%
		\thanks{E-mail: wtbgagoa@gmail.com} {} and
	Ettore Minguzzi$^\flat$%
		\thanks{E-mail: ettore.minguzzi@unifi.it}
\\[2ex]
	{\small $^\sharp$Departamento de Matem\'aticas, Universidad de C\'ordoba,}\\
	{\small Campus de Rabanales, 14071, C\'ordoba, Spain}\\
	{\small $^\flat$ Dipartimento di Matematica e Informatica ``U. Dini'', Universit\`a
	degli Studi di Firenze}\\
	{\small Via S. Marta 3,  I-50139 Firenze, Italy}
}
\date{\today}
\begin{document}
\maketitle

\begin{abstract}
In this paper we extend the projective symmetry of the full metric-affine Einstein-Hilbert theory to
a new symmetry transformation in the space of affine connections called the {\em amplified symmetry}.
We prove that the Lagrangian of the
Standard Model of particle physics is invariant under this new symmetry.
We also show that the gravitational Lagrangian
can be modified so that the amplified symmetry extends to the gravitational sector and hence to the whole action.
The new theory so constructed
is shown to be dynamically equivalent to Einstein-Cartan's though genuinely metric-affine.
\end{abstract}

\section{Introduction}

In the early twenties of the last century Cartan realized that Einstein's relativity  can be generalized to  connections with torsion. Even before the spin of the electron was discovered he inferred that the source of torsion was the density of intrinsic angular momentum of matter \cite{cartan86}. With later contributions by Sciama \cite{sciama64} and Kibble \cite{kibble61} this led to the  development of the so called Einstein-Cartan-Sciama-Kibble theory, according to which torsion is only present inside matter. Indeed, according to the field equations, torsion does not propagate \cite{hehl76} (modified versions in which torsion propagates  exist \cite{hammond02}).

This theory is still regarded  as one of the most elegant viable generalizations of Einstein's gravity.
Its geometric formulation owes much to Trautman \cite{trautman71,trautman71b,trautman73,trautman75,trautman06} and
to his thorough use of differential forms and covariant exterior differentiation, see also \cite{gockeler89}.

Since torsion enters the Raychaudhuri equation through  terms that do not  have a definite sign \cite{luz17,luz20,pasmatsiou17} it can avert the spacetime singularities predicted by singularity theorems.
The possibly repulsive effect of torsion became even more appealing with
 the discovery of the accelerated expansion of the Universe. Since that discovery, the study of gravitational dynamics with torsion  experienced a resurgence as
authors began  exploring new physics, being it  quantum or classical. It was thought that  other generalizations of Einstein's gravity beside Einstein-Cartan's  that might more easily put to experimental test could be possible.

A variety of alternatives were considered in which torsion or non-metricity do not vanish. In addition to Einstein-Cartan theory  we mention teleparallel gravity \cite{cho76,hayashi79,aldrovandi13}, Weyl conformal gravity \cite{mannheim89,romero12,wheeler18} (with trace of non-metricity), the recently studied symmetric teleparallel gravity \cite{nester99,jarv18,jimenez18,jimenez20} (no curvature, no torsion, with non-metricity), Weyl-Cartan theory \cite{moon10} (with trace of non-metricity and torsion), Eddington-Schr\"odinger purely affine theory \cite{poplawski07} (with torsion and no metric) and the most general metric-affine theories in which the connection is not constrained \cite{hehl95,vitagliano11,vitagliano14}.
The last approach would have been the conceptually more satisfactory because the Palatini's variational
principle is more elegantly formulated without constraining torsion or non-metricity a priori.

In general metric-affine theories the dynamical equation  for the connection has as source the
hypermomentum \cite{hehl76}. This quantity splits into the antisymmetric component, i.e.\ the spin
density, the trace component, vanishing due to projective symmetry of the Einstein-Hilbert Lagrangian,
and the shear traceless component (the so-called shear hypermomentum).

In this work we introduce a new symmetry transformation in the space of affine
connections called {\em the amplified symmetry}.
We confirm that the Lagrangian of
the Standard Model of particle physics is invariant under this symmetry transformation and as a consequence we conclude that in this theory the shear hypermomentum vanishes. Moreover,
we extend the amplified symmetry to the gravitational sector by constructing explicitly a gravitational Lagrangian
that has this invariance (the $\mathcal{L}_0$ Lagrangian). We furthermore prove that the theory
combining the Lagrangian $\mathcal{L}_0$ and the Standard Model Lagrangian is dynamically equivalent to the
Einstein-Cartan Theory.
Therefore our new theory reproduces the Standard Model physics but it is truly metric-affine, as we do not impose any
restriction on the metric or the affine connection.

The structure of the paper is as follows:
section \ref{sec:conventions} introduces our notation and conventions,
section \ref{sec:projective-invariance} is a review of the Einstein-Hilbert full metric-affine
theory, with special emphasis on its projective invariance properties  (Remark \ref{viq}).
Although part of the material might be already known, we felt it necessary to include it here in order to
set up the starting point for deriving our new results
so that a reader familiar with this material can jump straight to
section \ref{cmo}.
We analyze the extension of the projective invariance to the matter sector (subsection \ref{subsec:EH-dynamic})
and discuss the dynamics of this theory and some of its specializations, indicating whether the
projective invariance of the full (gravity plus matter) theory is kept or not (subsection
\ref{subsec:hypermomentum}). Section \ref{cmo} is where our new results are presented:
the definition of the amplified symmetry in Eq. \eqref{kad}, a gravitational Lagrangian $\mathcal{L}_\lambda$ (in Eq. \eqref{bta})
invariant under the  amplified symmetry for $\lambda=0$,
the dynamical equivalence of $\mathcal{L}_0$ coupled
with matter to Einstein-Cartan Theory whenever the matter
sector enjoys the amplified symmetry (Theorem \ref{theo:einstein-cartan-amplified})
and finally the result that this is indeed the case for the Standard
Model Lagrangian (section \ref{sec:standard-model}, Theorem \ref{theo:standard-model}).
Some of the computations of this paper have been double-checked with {\em xTerior}
\cite{xTerior}.

\section{Conventions and formalism}
\label{sec:conventions}
We start by describing the conventions and formalism used in the paper.
We shall work in a four dimensional differentiable manifold $M$ where an arbitrary basis
$\{e_a\}$  and its dual co-basis
$\{e^a\}$, $a=1,\dots,4$ are defined. We do not assume that $\{e_a\}$ are orthonormal.
Latin small indices will represent components
of geometric objects in the basis and we will use small Greek letters to denote
spacetime (abstract) indices of tensors.
As is well-known, a covariant derivative $\nabla$ (affine connection)
on $M$ induces the connection 1-form
$\omega^a{}_b$ and the curvature 2-form by means of the relations
\begin{equation}\label{vaf}
\nabla e^a = -\omega^a{}_b\otimes e^b\;,\quad
\mathcal{R}^a{}_{b}:=\dd \omega^a{}_{ b} +\omega^a{}_{ c} \wedge \omega^c{}_{ b}.
\end{equation}
The components of the curvature 2-form are introduced through
$\mathcal{R}^a{}_{ b}=R^a{}_{ bcd} \tfrac{1}{2} e^c \wedge e^d$,
Our conventions for the Ricci tensor and the
scalar curvature are $R_{ab}:=R^c{}_{a c b}$ and $R:=R_{ab} g^{ab}$, respectively.
The torsion 2-form is
\begin{equation} \label{vid}
T^a:=\dd^\nabla e^a= \dd e^a+\omega^a{}_{ b}\wedge e^b,
\end{equation}
where $\dd^\nabla$ is the covariant exterior differential.
The torsion components are introduced through $T^a=T^a{}_{bc}\tfrac{1}{2} e^b \wedge e^c$.
 Suppose $M$ has a (Lorentzian) metric $g$. If $g_{ab} := g(e_a, e_b)$ then the
non-metricity 1-form is
\[
\mathcal{G}_{ab}:=\dd^\nabla g_{ab},
\]
The  shear (traceless) non-metricity is
$\check{\mathcal{G}}_{ab}:=\mathcal{G}_{ab}-\frac{1}{4}  \mathcal{G} {g}_{ab}$ where
$\mathcal{G}=g^{cd} \mathcal{G}_{cd}$ is the Weyl 1-form. Sometimes we will have to display the additional 1-form
index of $\mathcal{G}_{ab}$. Our convention is then
$$
\mathcal{G}_{ab} = \mathcal{G}_{abc}e^c.
$$
Similarly $\mathcal{G}_d$ represents the basis components of
$\mathcal{G}=\mathcal{G}_d e^d$.

  The Bianchi identities are
\begin{equation} \label{vok}
\dd^\nabla T^a=\mathcal{R}^a{}_{ b} \wedge e^b, \quad \dd^\nabla \mathcal{R}^a{}_{ b}=0, \quad \dd^\nabla \mathcal{G}_{ab}=-\mathcal{R}_{ab}-\mathcal{R}_{ba}.
\end{equation}
With $\eta^{(g)}$ or simply with $\eta$ or $ \dd vol$ we denote the volume form induced by the metric. In components
\[
\eta^{(g)}=\eta^{(g)}_{abcd} \tfrac{1}{4!} e^a\wedge e^b\wedge e^c\wedge e^d,
\]
 $\eta^{(g)}_{abcd}=\sqrt{-\det g_{ab}} [abcd]$ with $[abcd]$ totally anti-symmetric symbol with [1234]=1.

Let $\mathcal{L}=\mathcal{L}(g_{ab},\omega^a{}_{b}, e^a)$ be a general Lagrangian 4-form expressed in terms of connection form, metric coefficients and co-basis. The variation of the integral $\int \!\mathcal{L}$ reads
\[
\delta \!\int \! \mathcal{L}=\int \left(\frac{\delta \mathcal{L}}{\delta g_{ab}}  \delta g_{ab}+\frac{\delta \mathcal{L}}{\delta \omega^a{}_{ b}} \wedge
\delta \omega^a{}_{b} + \frac{\delta \mathcal{L}}{\delta e^a} \wedge \delta e^a\right)
\]
The invariance of the integral under diffeomorphisms implies
 \cite{kopczynski90} \cite[Sec.\ 5.2.1]{hehl95}
\begin{equation} \label{ckf}
\frac{\delta \mathcal{L}}{\delta g_{ab}}   \mathcal{G}_{ab c}\zeta^c+\frac{\delta \mathcal{L}}{\delta \omega^a{}_{b}} \wedge i_\zeta \mathcal{R}^a{}_{b}+\frac{\delta \mathcal{L}}{\delta e^a} \wedge i_\zeta T^a  +\zeta^a \dd^\nabla \frac{\delta \mathcal{L}}{\delta e^a}=0
\end{equation}
for every vector field $\zeta$.
Moreover, if $\mathcal{L}$ depends on just  $(g,\nabla)$ the invariance of the integral under general linear transformations of the co-basis results into the equation \cite[Eq.\ I.3]{trautman71}
\begin{equation} \label{toz}
\frac{\delta \mathcal{L}}{\delta e^a} \wedge e^b  =2 \frac{\delta \mathcal{L}}{\delta g_{rb}}g_{ra}+\dd^\nabla  \frac{\delta \mathcal{L}}{\delta \omega^a{}_{ b}}.
\end{equation}

\section{Projective invariance of the Einstein-Hilbert full metric-affine theory}
\label{sec:projective-invariance}

The Einstein-Hilbert Lagrangian form is
\begin{align*}
\mathcal{L}_{EH}&=\tfrac{1}{2} \eta^{(g)}_{abcd} \mathcal{R}^{a}{}_{ s} g^{sb} \wedge  e^c \wedge e^d=\tfrac{1}{4} \eta^{(g)}_{abcd} {R}^{a b}{}_{ st}   e^s \wedge e^t\wedge e^c \wedge e^d=R \dd vol,
\end{align*}
Let us introduce the Einstein tensor $G_{ab}$ through the identity
\[
\frac{\delta \mathcal{L}_{EH} }{\delta g_{ab}}=-G^{ab}\dd vol,
\]
 where the variation is taken with $\omega^a{}_{b}$ and $e^a$ fixed. This implies
\begin{equation} \label{bka}
G_{ab}=R_{(ab)}-\tfrac{1}{2} R g_{ab}.
\end{equation}
Let us introduce the Hilbert 3-form
\[
\tau_d=\frac{1}{2}\frac{\delta \mathcal{L}_{EH}}{\delta e^d}=\tfrac{1}{2} \eta_{abcd}    \mathcal{R}^a{}_{s} g^{sb} \wedge  e^c
\]
which is such  that $\mathcal{L}_{EH}=\tau_d\wedge e^d$. Let us define $\tau_{d }{}^{r}$ through
\begin{equation} \label{woa}
\tau_d=\tfrac{1}{6} \tau_{d }{}^{r} \eta_{r uvz} e^u \wedge e^v \wedge e^z
\end{equation}
so that
\begin{equation} \label{bkb}
\tau_{ab}= \tfrac{1}{2}(R_b{}^r{}_{ar}  +R^r{}_{ bra})-\tfrac{1}{2} R  g_{ab} .
\end{equation}
Notice that $\tau_{ab}$ is not necessarily symmetric. From (\ref{woa}) we get \begin{equation}
\tau_p\wedge e^q=-\tau_p{}^q \dd vol
\end{equation}
 thus a condition equivalent to symmetry is  $\tau^p\wedge e^q=\tau^q\wedge e^p$.

A {\em projective transformation} is a transformation of the connection forms given by
\begin{equation} \label{jad}
\omega^a{}_{b} \to \omega'{}^a{}_{b}=\omega^a{}_{b}+\delta^a{}_b \alpha
\end{equation}
where $\alpha$ is any 1-form. Of course, this can be also viewed as a
transformation of the affine connection $\nabla$ into a new affine connection $\nabla'$.

\begin{remark} \label{viq}
We present some  known facts  that are useful in order to clarify the role of the projective invariance in the
study of metric-affine theories (and in particular in the full metric-affine Einstein-Hilbert theory). For these type of results see \cite{hehl78,giachetta97,dadhich12}.

The following identities are  easily deducible from  (\ref{vaf})
\begin{align*}
R'^d{}_{cab}&=R^d{}_{cab}+2 \delta^d{}_c \p_{[a} \alpha_{b]},\\
R'_{ab}&=R_{ab}+2 \p_{[a} \alpha_{b]},
\end{align*}
and from them
it follows that  $R^d{}_{cab}-\delta^d{}_c R_{[ab]}$, $R_{(ab)}$, $R$, $\mathcal{L}_{EH}$, $G_{ab}$,  are projectively invariant.

Actually, the tensor $\tau_{ab}$  is also  projectively invariant a fact that does not seem to have been previously noticed. Its invariance can be derived by noticing  that $\mathcal{L}_{EH}$ is invariant and so must be its variation with respect to the co-basis. This observation provides also a quick justification for the projective invariance of $G_{ab}$ as this tensor is obtained from the variation of $\mathcal{L}_{EH}$ with respect to $g_{ab}$.

 The shear non-metricity  $\check{\mathcal{G}}_{ab}$ is also prejectively invariant as it is immediate from $\mathcal{G}'_{ab}=\mathcal{G}_{ab}-2\alpha g_{ab}$. From Eq.\ (\ref{vid}) we have $T'^a-T^a=\alpha \wedge e^a$ and from
 \begin{equation} \label{vob}
 \mathcal{G}'=\mathcal{G}-8 \alpha
 \end{equation}
  we get the invariance of
  \begin{equation} \label{boz}
  T^h{}_{ab}-\tfrac{1}{4} \delta^h{}_{[a} \mathcal{G}_{b]} .
  \end{equation}
   Finally, it is interesting to observe that every tensor $Q_a{}^b$ which is projectively invariant has a projectively invariant
covariant divergence $\nabla_b Q_a{}^b{}$, indeed
   \begin{align*} \label{pre}
\nabla'_b Q_a{}^{ b}&=\nabla_b Q_a{}^{b}- Q_r{}^{ b}(\delta^r{}_a \alpha_b) +Q_a{}^{ r} (\delta_r{}^b\alpha_b)=\nabla_b Q_a{}^{ b}.
\end{align*}
\end{remark}

Of course, as observed in \cite{hehl78,bernal17} unparametrized geodesics are also preserved by  (\ref{jad}), as it follows from a well-known result by Weyl according to which two geodesic equations $\ddot x^a+\Gamma^a_{bc} \dot x^b \dot x^c=0$ and $\ddot x^a+\Gamma'{}^a_{bc} \dot x^b \dot x^c=0$ share the same solutions up to parametrization if and only if
\[
\Gamma'{}^a_{(bc)}=\Gamma^a_{(bc)}+\delta^a{}_{(b} \alpha_{c)}
\]
for some 1-form $\alpha$.
This means that the $\nabla$-free fall is not affected by a projective change (still the geodesics for $
\nabla$ and $\nabla^{(g)}$ could differ if they are not projectively related). Similarly, the reading of
a clock over a timelike geodesic is not affected as this is a metric notion hence independent of the
connection. Since $G_{ab}$ is invariant and equal to the stress-energy momentum tensor many   observables
connected to the source are also expected to be projectively invariant.
This really suggests that
the projective symmetry should be regarded as a full symmetry of nature, an idea that
is implemented by demanding the projective invariance of the matter Lagrangian, as discussed next.

\subsection{Extension of the projective invariance to the matter sector in
the full Einstein-Hilbert theory}
\label{subsec:EH-dynamic}
Soon after Hilbert introduced the variational approach to Einstein's equations, Palatini
\cite{palatini19} noticed that the variation with respect to the metric could be more easily accomplished
in two steps,  first by varying with respect to the metric and torsionless connection and then by considering
the variation of the Levi-Civita connection with respect to the metric. This led Einstein to realize that
the  metric and  connection could be regarded as independent variables \cite{einstein25,ferraris82},
through what is now known as Palatini's approach, and that one could even consider connections with
torsion. However, as recalled by Remark \ref{viq},
the projective change of connection forms
leaves invariant the  Einstein-Hilbert action, thus, as a result, the connection remains undetermined,
though in the projective class of the Levi-Civita connection \cite{trautman71,trautman75,dadhich12}.
In fact it is the dynamical equation for the  connection alone that provides this result
\begin{equation} \label{voa}
\frac{\delta \mathcal{L}_{EH}}{\delta \omega^a{}_{b}}=0 \ \Leftrightarrow \ \exists \alpha: \ \omega^a{}_{b}=(\omega^{(g)})^a{}_{b} +\delta^a{}_b \alpha,
\end{equation}
see also the end of Sec.\ \ref{subsec:hypermomentum} for a proof.
In order to fix the connection $\omega^a{}_{b}$ to the Levi-Civita connection form $(\omega^{(g)})^a{}_{b}$ of general
relativity some `gauge fixing' has to be imposed. Einstein observed that the conditions (a) the connection is
torsionless, or (b) the trace of the torsion vanishes, are sufficient, but some other conditions were later proposed,
for instance one can assume that (c) the connection is metric or that (d) it is compatible with the metric volume form
\cite{hehl81}, $\nabla \eta^{g}=\eta_g \otimes \frac{1}{2} \mathcal{G}=0$, where $\nabla$ is the connection that defines $\omega^a{}_{b}$.
One can also recover the Levi-Civita connection dynamically by introducing additional terms into the Lagrangian which are expressed in terms of torsion and non-metricity
\cite{burton98}
or by means of Lagrange multipliers \cite{hehl78}.

In general the action will also have a matter part
\[
S=S_{EH}+S_M=\int \mathcal{L}_{EH}(g,\nabla)+\int \mathcal{L}_M(g,\p g, \nabla,\phi)
\]
thus this scheme of recovering the metric compatibility condition from the variation of the gravitational action can only work if $\mathcal{L}_M$ does not depend on the independently varied connection $\nabla$, but solely on the Levi-Civita connection $\nabla^{g}$, and hence on the metric and its first derivatives.
Whenever $\mathcal{L}_M$ depends on $\nabla$ one can expect a departure from general relativity.

As soon as the projective invariance of the gravitational action is taken into full account one faces
an important problem: the coupling between $\nabla$ and the matter does not necessarily correspond
to a {\em minimal coupling}. To understand why this happens, recall that the
connection is determined by the dy\-na\-mical equation for $\nabla$. To obtain such an equation it is convenient
to express the Einstein-Hilbert
Lagrangian in terms of the metric components, connection form potentials and co-basis $\{e^a\}$, as follows:
$\mathcal{L}_{EH}=\mathcal{L}_{EH}(g_{ab},\omega^a{}_{ b}, e^a)$, and similarly for the matter part $\mathcal{L}_M$.
Then the variation with respect to $\omega^a{}_{b}$ gives the dynamical equation for $\nabla$
\begin{equation} \label{vis}
\frac{\delta \mathcal{L}_{EH}}{\delta \omega^a{}_{b}}+\frac{\delta \mathcal{L}_{M}}{\delta \omega^a{}_{b}}=0.
\end{equation}
However, the projective invariance of the gravitational action implies
\begin{equation} \label{lol}
\frac{\delta \mathcal{L}_{M}}{\delta \omega^a{}_{ b}} \delta^a{}_b=0.
\end{equation}
In order to satisfy this extra equation one can assume  that $\mathcal{L}_M$
depends only on $g$ and its derivatives, hence on $\omega^{(g)}{}^a{}_{ b}$, but not on $\omega^a{}_{ b}$.
(or $\nabla$).
Although
this is certainly a possibility, it renounces  the idea of minimal coupling between $\nabla$ and the
matter,
thus giving to the connection a lesser important role than expected.

Sandberg \cite{sandberg75}  explored a different possibility. One should not try to remove the projective invariance dynamically. The previous problem with the constraint  (\ref{lol})
could be solved by demanding the matter Lagrangian to be also projectively invariant. In other words
{\em the projective symmetry could be understood as a gauge symmetry of the whole theory, not just of the
gravitational part}.  Among the papers that helped to clarify this point we mention
\cite{hehl78,giachetta97,dadhich12,bernal17,bejarano19}. However, Sandberg considered only matter without
spin and ran into some problems.

\subsection{The dynamics of the affine connection in the full Einstein-Hilbert metric-affine theory}
\label{subsec:hypermomentum}

In this subsection we work out the dynamical equation \eqref{vis}
for the full Einstein-Hilbert metric-affine theory
and discuss the behaviour of each of the resulting conditions under the projective transformation.
Let us start by denoting
$\frac{\delta \mathcal{L}}{\delta \omega_{a b}}:=g^{ac}\frac{\delta \mathcal{L}}{\delta \omega^c{}_{b}}$.   It is convenient to introduce the tensor density $S^{mnh} $
  through
\begin{equation} \label{voc}
\frac{\delta \mathcal{L}_M}{\delta \omega_{ m n}}=-S^{mnh} \tfrac{1}{6}\epsilon_{huvz}e^u\wedge e^v\wedge e^z .
\end{equation}
The density $S^{mnh} $ has been called the {\em hypermomentum}
\cite{hehl76b} \cite{hehl95}. Let us define the trace $S^h=g_{mn}S^{mnh}$ and the {\em shear hypermomentum}
\[
\check S^{mnh}:=S^{(mn)h}-\tfrac{1}{4} S^h g^{mn}.
\]
Then the hypermomentum splits into  an  anti-symmetric contribution $S^{[mn]h}$ and a symmetric contribution $S^{(mn)h}$, where the latter splits further into trace (dilation) and shear parts
\[
S^{mnh}=S^{[mn]h}+\check{S}^{mnh}+\tfrac{1}{4} S^h g^{mn}.
\]
The anti-symmetric term is sometimes called the {\em spin density} for it is easy to show that the Dirac Lagrangian contributes to it.

The dynamical equation for the connection (\ref{vis}) splits into anti-symmetric and symmetric parts.
The anti-symmetric part of (\ref{vis}) is
\begin{equation} \label{bdh}
T^h{}_{ab}-\tfrac{1}{4} \delta^h{}_{[a} \mathcal{G}_{b]} =- S_{[ab]}{}^h -\tfrac{1}{2} S_{[ra]}{}^r   \delta^h{}_b+\tfrac{1}{2}S_{[rb]}{}^r \delta^h{}_a+ \check{\mathcal{G}}^{h}{}_{[ab]}
\end{equation}
where the left-hand side is projectively invariant.
When the non-metricity vanishes this equation becomes the standard dynamical equation for the torsion in the Einstein-Cartan theory.
The symmetric part of (\ref{vis})  splits further into the following equations
\begin{align}
S^h&=0,  \label{cos}\\
2 \check S^{mnh}&=\check{\mathcal{G}}^{nr}{}_{r} g^{h m} + \check{\mathcal{G}}^{mr}{}_{r}g^{h n} - \check{\mathcal{G}}^{hnm}-\check{\mathcal{G}}^{hmn},
\end{align}
where the former equation is due to the projective invariance of $S_{EH}$, cf.\ Eq.\ (\ref{lol}). The latter equation is  equivalent to the following equation \cite{hehl76b}
\begin{align} \label{spd}
 \check{\mathcal{G}}^{abc}&=- \check{S}^{bca}- \check{S}^{acb} + \check{S}^{bac} +\tfrac{1}{2} \check{S}^{ c r}{}_{r} g^{a b}.
\end{align}
Substituting back in (\ref{bdh}) we arrive at
\begin{align} \label{bdk}
\begin{split}
g_{h s} T^s{}_{ab}\!-\!\tfrac{1}{4} g_{h [a} \mathcal{G}_{b]}& =- S_{[ab] h} -\!\tfrac{1}{2} S_{[ra]}{}^r   g_{hb}+\tfrac{1}{2}S_{[rb]}{}^r g_{ha} \\
&\quad  - \check{S}_{h[ba]} + \check{S}_{h[ab]} +\tfrac{1}{2}  g_{h [a} \check{S}_{b] r}{}^{  r}
\end{split}
\end{align}
The torsion cannot be completely determined due to projective invariance of the gravitational action and hence  the projective indeterminacy of the connection. Only the projectively invariant combination on the left-hand side is dynamically determined.


We can now write down all the dynamical equations for $\nabla$
in the full metric-affine theory based on the Einstein-Hilbert Lagrangian. These are
Eq.\ (\ref{spd}) and Eq.\ (\ref{bdk}), while Eq.\ (\ref{cos}) is a constraint on the matter Lagrangian
due to consistency with the projective invariance of the gravitational action. It can be accomplished
dynamically by demanding projective invariance of the matter action as well, as observed by Sandberg
\cite{sandberg75} (hence the
full theory keeps the projective invariance).
These equations also clarify that the source of shear non-metricity
is the shear hypermomentum and that
\[
\check{\mathcal{G}}^{abc}=0 \Leftrightarrow  \check{S}^{abc}=0.
\]
By contrast,  torsion receives a contribution from both spin density and shear hypermomentum.
The dynamical equations also show  that both non-metricity and torsion do not propagate and so are only
expected inside matter.

If $\frac{\delta \mathcal{L}_{EH}}{\delta \omega^a{}_{b}}=0$ then by Eq.\ (\ref{vis}) $S^{abc}=0$ which
simplifies the previous equation leading to Eq.\ (\ref{voa}). Indeed, Eq.\ (\ref{spd}) tells us that the
shear non-metricity vanishes while Eqs.\ (\ref{vob}) and (\ref{bdh}) tell us that with a suitable projective
transformation both the full non-metricity and torsion vanish, thus the connection is Levi-Civita up
to a
projective transformation. This entails that
$G_{ab}=\tau_{ab}$ and $G_{ab}$ coincides with
the usual general relativistic Einstein tensor
constructed from the Levi-Civita connection,
due to the symmetries of the Riemann
tensor for the Levi-Civita connection.
Actually, the identity $G_{ab}=\tau_{ab}$ follows also from Eq.\ (\ref{toz}).

Using identity \eqref{toz}, we deduce that
the independent set of dynamical equations of the full Einstein-Hilbert metric-affine theory,
are those for the affine connection (Eqs.\ (\ref{spd}) and \ (\ref{bdk})) together with
\begin{equation} \label{cis}
2 \tau_a+\frac{\delta \mathcal{L}_{M} }{\delta e^a}=0.
\end{equation}

\subsection{Removing the projective invariance in Einstein-Hilbert metric-affine theories}
\label{subsec:remove-projective}
The previous calculations summarize
the dynamics of the affine connection in the full metric-affine theory based
on the Einstein-Hilbert action but also allow us to consider
more specialized related theories where the projective invariance is removed.
These more specialized theories start from more constrained geometries. The
constraints imply that some variations of the connection are not allowed and so
some hypermomentum components are not defined. Similarly, some variational
equations have to be dropped.

To begin with, observe that under a variation of the connection
$\delta \mathcal{G}_{ab}=-2 \delta \omega_{(ab)}$, so
using the basis independent decomposition
\begin{equation} \label{coa}
\delta \omega_{ab}=\delta \omega_{[ab]}+\check{\delta \omega}_{ab}+ \frac{1}{4}g_{ab} \delta \omega
\end{equation}
 we get
\begin{equation} \label{kki}
 \delta \check{\mathcal{G}}_{ab}=-2  \check{\delta\omega}_{ab}, \qquad  \delta \mathcal{G}=-2 \delta \omega.
\end{equation}
On the other hand,
the  variation of the matter Lagrangian
with respect to the connection reads,  using Eq.\ (\ref{voc})
\begin{align*}
\delta \mathcal{L}_M&
=  - \tfrac{1}{6}\epsilon_{huvz}e^u\wedge e^v\wedge e^z \wedge \left(S^{[ab] h}  \delta \omega_{[ab]} + \check S^{abh}   \check{\delta\omega}_{ab}+  \tfrac{1}{4} S^{h}  \delta \omega \right).
\end{align*}
Thus if $\mathcal{G}=0$ then $S^h$ is not defined. Similarly, if $\check{\mathcal{G}}_{ab}=0$ then $\check S_{abc}$ is not defined.
We see that the spin density $S_{[ab]c}$ is always well defined.





Let us now consider what type of conditions might remove the projective invariance.
Since $T^s{}_{sb}-\tfrac{3}{4}  \mathcal{G}_{b}$ is determined by the dynamics,
as  is evident by taking the trace of (\ref{bdk}), and it is projectively invariant, fixing $T^s{}_{sb}$ has the same effect as fixing $\mathcal{G}_b$ and hence removes
the projective invariance due to (\ref{vob}). Still the most natural way of removing
the projective invariance seems to be through the condition $\mathcal{G}=0$ as this
mechanism does not need to invoke any dynamical equation. It should be observed that
the condition $T^s{}_{sb}-\tfrac{3}{4}  \mathcal{G}_{b}=0$ being projectively invariant,
does not remove the projective symmetry.

Bearing the previous considerations in mind we review next
different metric-affine theories already considered in the literature,
indicating when the projective invariance is preserved, and when it is not.

\begin{quote}

{\bf Non-dilational metric-affine theory}\\
We assume that the geometry is constrained by the condition $\mathcal{G}=0$. The projective invariance of the Einstein-Hilbert full metric-affine theory is
removed and the equations are (\ref{cis}) and
\begin{align}
 \check{\mathcal{G}}^{abc}&=- \check{S}^{bca}- \check{S}^{acb} + \check{S}^{bac} +\tfrac{1}{2} \check{S}^{ c r}{}_{r} g^{a b}.\\
 \begin{split}
g_{h s} T^s{}_{ab}& =- S_{[ab] h} -\tfrac{1}{2} S_{[ra]}{}^r   g_{hb}+\tfrac{1}{2}S_{[rb]}{}^r g_{ha} \\
&\quad  - \check{S}_{h[ba]} + \check{S}_{h[ab]} +\tfrac{1}{2}  g_{h [a} \check{S}_{b] r}{}^{  r}.
\end{split}
\end{align}
It does not make sense to speak of the component $S^h$ of the hypermomentum since the constraint $\mathcal{G}=0$ implies $\delta \omega_{ab} g^{ab}=0$. Namely, the variation that one would need to define $S^h$ is not allowed.
The torsion is fully determined by the dynamical equations as in the Einstein-Cartan theory.

{\bf Dilational Einstein-Cartan theory}\\
We assume that the geometry is constrained by the vanishing of the shear non-metricity, $\check{\mathcal{G}}_{ab}=0$.  The projective invariance of the Einstein-Hilbert metric-affine theory is still present. The equations are (\ref{cis}) and
\begin{align}
\begin{split}
g_{h s} T^s{}_{ab}-\tfrac{1}{4} g_{h [a} \mathcal{G}_{b]}& =- S_{[ab] h} -\tfrac{1}{2} S_{[ra]}{}^r   g_{hb}+\tfrac{1}{2}S_{[rb]}{}^r g_{ha}
\end{split}
\end{align}
and we have still the constraint $S^h=0$ to be imposed on the matter Lagrangian. Again it can be dynamically realized by using a projective invariant $\mathcal{L}_M$.  It does not make sense to speak of shear hypermomentum because the variation needed for its definition cannot be accomplished due to the geometric constraint $\check{\mathcal{G}}_{ab}=0$.

{\bf Einstein-Cartan theory}\\
In the Einstein-Cartan theory one assumes vanishing non-metricity from the outset, $\mathcal{G}_{ab}=0$. The projective invariance of the metric-affine theory gets removed. The equations are (\ref{cis}) and
\begin{align}
g_{h s} T^s{}_{ab}& =- S_{[ab] h} -\tfrac{1}{2} S_{[ra]}{}^r   g_{hb}+\tfrac{1}{2}S_{[rb]}{}^r g_{ha}.
\end{align}
The torsion is fully determined and non-propagating. As above, it does not make sense to speak of the symmetric  hypermomentum component $S^{(ab)c}$, and hence of $\check{S}^{abc}$ or $S^h$.

\end{quote}

\subsection{The shear hypermomentum in the full Einstein-Hilbert metric-affine theory}
As reviewed in the two previous subsections,
the fully Einstein-Hilbert metric-affine theory, or its non-dilational version require the
presence of shear hypermomentum in the dynamics. This raises the question as to what physical
or observed properties of matter could be related to shear hypermomentum.
In flat spacetime the study of the irreducible representations of the Poincar\'e group has clarified that
elementary particles are characterized by mass and spin \cite{weinberg95}.
One possible way of accounting for the shear hypermomentum is to
work with representations of the full affine group.
This is the approach that has been taken in \cite{MAG1977,MAG1978,MAG1979}
where the fermionic contribution to matter is assumed to carry infinite
dimensional representations of the general affine group. Under this framework, the
shear hypermomentum has been identified with certain properties of
of hadronic matter.
This is an intriguing possibility but as far as we understand, it certainly goes beyond the Standard
Model of particle physics that we study in this paper.

One could ask a similar question
for the dilation hypermomentum but in this case we note that
the  gravitational action is projectively invariant. The absence of a dilation
hypermomentum signals that the projective invariance can be promoted to a full symmetry of the total
action as suggested by Sandberg \cite{sandberg75}.

\section{A metric-affine theory with amplified symmetry} \label{cmo}
In this section we extend the projective symmetry studied before to an {\em amplified symmetry}.
The idea is to show that under this new amplified symmetry it is possible to recover a theory
whose dynamics is equivalent to the Einstein-Cartan one while being genuinely metric-affine as the non-metricity is not fixed a priori.
Notice that, as previously observed, in the Einstein-Cartan theory it does not make sense to speak of shear hypermomentum of matter as the action variation defining it is not allowed.
As soon as non-vanishing shear non-metricity is allowed, the shear hypermomentum becomes well defined and one faces the problem of its physical identification.   In our theory, where the shear hypermomentum does indeed make sense, this problem is solved through a mechanism that makes it vanish. In this way there is no need to identify it in some property of matter.
This absence of the shear hypermomentum, $\check{S}^{mnh}=0$,
is obtained much in the same way as $S^h=0$ was obtained under the projective symmetry,
that is, thanks to the amplified symmetry of
the full gravity plus matter action.

The amplified transformation is defined by
\begin{equation} \label{kad}
\omega^a{}_{b} \to \omega'{}^a{}_{ b}=\omega^a{}_{ b}+A^a{}_{ b}, \qquad A_{ab}=A_{ba}
\end{equation}
where $A^a{}_b$ is a  matrix-valued 1-form.

In order to accomplish the amplified symmetry on the gravitational sector we need to modify the Lagrangian.  We do it by adding terms proportional to the non-metricity squared.  For a more radical modification that makes the Lagrangian totally independent of the connection see \cite{harada20}.
Notice that most often authors have modified the Lagrangian to remove the projective symmetry rather than to enlarge it  \cite{burton98}.
Let us consider the  expression
\begin{align}
\mathcal{L}_\lambda
&=\tfrac{1}{8} \, \eta_{abcd}  [(1-\lambda) \dd^\nabla g^{-1} \wedge \dd^\nabla g +4 \mathcal{R}]^a{}_{ s} g^{sb} \wedge  e^c  \wedge e^d \nonumber \\
&=\left( R-\tfrac{1-\lambda}{4}\mathcal{G}^{rs}{}_s \mathcal{G}_r{}^{ t}{}_t +\tfrac{1-\lambda}{4} \mathcal{G}^{uvz} \mathcal{G}_{uzv}\right)\dd vol, \label{bta}
\end{align}
which reduces itself to Einstein-Hilbert's for $\lambda=1$.  Our candidate is $\mathcal{L}_0$ which indeed enjoys the amplified symmetry (see Appendix I), hence
\begin{equation} \label{odd}
\frac{\delta \mathcal{L}_0}{\delta \omega_{ (a b)}} =0.
\end{equation}
For $\lambda\ne 0$, the invariance is restricted to the projective form (\ref{jad}). The projective invariance can be confirmed
by inspection since it can be easily checked that in (\ref{bta}) the non-metricity $\mathcal{G}$ can be replaced with the projective invariant tensor $\check{\mathcal{G}}$.  Thus for $\lambda \ne 0$ we get the weaker equation
\[
\frac{\delta \mathcal{L}_\lambda}{\delta \omega^a{}_{  b}} \delta^a{}_b=0.
\]

\subsection{The canonical metric representative}
\label{subsec:metric-representative}
\bp
The condition of metric compatibility defines an affine space on the space of connections.
\ep
\proof
To prove this observe that if $\bar \omega^a{}_{b}$ is a fixed connection and $\omega{}^a{}_{b}=\bar
\omega^a{}_{ b}+B^a{}_{b}$ is an arbitrary connection, then the condition of metric compatibility for
$\omega^a{}_{b}$ reads
\[
0=\dd^{\bar \nabla} g-B_{ab}-B_{ba}
\]
 which is affine in the coordinates $B^a{}_{ b}$.
 \qed

 A change of connection as above $\tilde \omega{}^a{}_{b}=\omega^a{}_{b}+A^a{}_{b} $, with $A_{ab}=A_{ba}$, does not preserve the non-metricity, instead it defines lines on connection space that are transverse to the hyperplane of metric compatibility and hence they define a projection on it. Thus $\tilde \omega$ is metric compatible if
 \[
 (\dd^{\nabla} g)_{ab}=A_{ab}+A_{ba}=2A_{ab}
 \]
 which allows us to determine the metric compatible projection
 \begin{equation} \label{vvp}
 \tilde \omega{}^a_{\ b}=\omega^a_{\ b}+\tfrac{1}{2} g^{ar}(\dd^\nabla g)_{rb} ,
 \end{equation}
 which we call the {\em canonical metric (connection) representative}.
 Its  torsion is
\begin{equation} \label{vog}
\tilde T{}^a=T^a+\tfrac{1}{2} g^{ar}(\dd^\nabla g)_{rb} \wedge e^b= ( T^a{}_{bc} -\mathcal{G}^a{}_{[bc]}) \tfrac{1}{2} e^b\wedge e^c,
\end{equation}
which, as shown in Theorem \ref{viq-ampl} below, is invariant under the amplified symmetry.
All these considerations lead us to the following result
\bp\label{prop:metric-connection}
The amplified symmetry determines an equivalence relation among connections. In each equivalence class we can select
one canonical metric connection $\tilde \omega$ given by (\ref{vvp})
whose (canonical) torsion is given by (\ref{vog}). The canonical
torsion is invariant under the amplified symmetry.\qed
\ep


\subsection{The dynamical equations for $\mathcal{L}_\lambda$ in the presence of ma\-tter}
\label{subsec:dynamical}
Let us study the variational derivatives of $\mathcal{L}_\lambda$ so as to be able to write down the dynamical
equations of the theory in the presence of matter.
The (modified) Hilbert 3-form is
\begin{align}
\tau_d&=\frac{1}{2}\frac{\delta \mathcal{L}_\lambda}{\delta e^d} =-\tfrac{1}{8} \, \eta_{abcd}  [(1-\lambda) (g^{-1}\dd^\nabla \! g) \wedge (g^{-1}\dd^\nabla \! g) -4 \mathcal{R}]^a{}_{s} g^{sb} \wedge  e^c  \label{wob}
\end{align}
so that $\mathcal{L}_\lambda=\tau_d\wedge e^d$. Let us define $\tau_{d }{}^{r}$ as before through Eq.\ (\ref{woa})
then
\begin{equation} \label{vka}
\tau_{ab}= \tfrac{1}{2}(R_b{}^r{}_{ar}   +R^r{}_{ bra})-\tfrac{1}{2} R \, g_{ab} +\tfrac{1-\lambda}{8} g_{bs} \delta^{suv}_{amn} \mathcal{G}^m{}_{tu}\mathcal{G}^{tn}{}_v .
\end{equation}

Let us introduce the symmetric generalized Einstein tensor $G^{mn}$ through
\[
\frac{\delta \mathcal{L}_\lambda}{\delta g_{mn}}=-G^{mn}\dd vol,
\]
 where the variation is taken with $\omega^a_{\ b}$ and $e^a$ fixed, then
\begin{equation} \label{vkb}
\begin{aligned}
 G^{mn}&=R^{(mn)}-\tfrac{1}{2}R g^{mn}\\
 &\quad
 -\tfrac{1-\lambda}{4} \left\{
 2   ({R}^{(am)}{}_{a}{}^n    +{R}^{(an)}{}_{a}{}^{m} ) +\delta^{puv}_{abc}  \mathcal{G}^{a(m}{}_p   g^{n)b}  (T^c_{uv}-\mathcal{G}^{c}{}_{[uv]} )  \right.\\
&\quad  +2 g^{mn} \left( -\tfrac{1}{4}\mathcal{G}^{rs}{}_s \mathcal{G}_r^{\ t}{}_t +\tfrac{1}{4} \mathcal{G}^{uvz} \mathcal{G}_{uzv}\right) \left. +\mathcal{G}_s{}^{(nm)}\mathcal{G}^{sb}{}_b-\mathcal{G}^{sb(m}\mathcal{G}^{n)}{}_{sb}  \right\}
\end{aligned}
\end{equation}

The dynamical equation for the connection takes the form
\[
g^{rm}\frac{\delta \mathcal{L}_\lambda}{\delta \omega^r_{ \ n}}=S^{mnh} \tfrac{1}{6}\epsilon_{huvz}e^u\wedge e^v\wedge e^z .
\]
 Notice that since the variational derivative on the left-hand side is taken with the components $g_{ab}$ fixed, we can also write the left-hand side   as $\frac{\delta \mathcal{L}_\lambda}{\delta \omega_{ m n}}$. We can split this equation into anti-symmetric and symmetric parts.


As far as the anti-symmetric part is concerned, the left-hand side of the equation $\frac{\delta \mathcal{L}_\lambda}{\delta \omega_{ [m n]}}=S^{[mn]h} \frac{1}{6}\epsilon_{huvz}e^u\wedge e^v\wedge e^z $ can be rewritten as
\begin{equation}
\frac{\delta \mathcal{L}_\lambda}{\delta \omega_{ [m n]}}= g^{ma} g^{nb}\eta_{abcd} e^c\wedge [T^d+\tfrac{1}{2} (g^{-1} \dd^\nabla g)^d{}_r \wedge e^r],
\end{equation}

while the equation itself reads (it is the analog of \cite[Eq.\ I.12]{trautman71} in Einstein-Cartan theory)
\begin{equation} \label{bbh}
T^h{}_{ab}-  \mathcal{G}^h{}_{[ab]} =- S_{[ab]}{}^h -\tfrac{1}{2} S_{[ra]}{}^r   \delta^h{}_b+\tfrac{1}{2}S_{[rb]}{}^r \delta^h{}_a.
\end{equation}
It does not depend on $\lambda$ and displays on the left-hand side an amplified symmetry invariant, i.e.\ the components of the canonical torsion.
Notice that Eq.\ (\ref{bbh}) can be used to simplify Eq.\ (\ref{vkb}) where we took care in isolating a term given by the canonical torsion.
By Eq.\ (\ref{bbh}) the spin density vanishes if and only if the canonical torsion vanishes.



Turning now to the symmetric part, the left-hand side of the equation $\frac{\delta \mathcal{L}_\lambda}{\delta \omega_{ (m n)}}=S^{(mn)h} \frac{1}{6}\epsilon_{huvz}e^u\wedge e^v\wedge e^z$ can be rewritten as
\begin{equation*}
\frac{\delta \mathcal{L_\lambda}}{\delta \omega_{ (m n)}} =  \tfrac{\lambda}{4} \, \eta_{abcd}  g^{sb}
(g^{a m} g^{nz}  +  g^{a n} g^{mz} )  (\dd^\nabla g)_{zs} \wedge  e^c  \wedge e^d ,
\end{equation*}
while the equation itself is equivalent to the next formulas
\begin{align}
S^h&=0, \label{vpo0} \\
 \check{S}^{mnh}&=\tfrac{\lambda}{2}(\check{\mathcal{G}}^{nr}{}_{r} g^{h m} + \check{\mathcal{G}}^{mr}{}_{r}g^{h n} - \check{\mathcal{G}}^{hnm}-\check{\mathcal{G}}^{hmn}), \label{vpo1}
\end{align}
where the latter equation  is equivalent to the following equation
\begin{align} \label{vpo}
 \lambda\check{\mathcal{G}}^{abc}&=- \check{S}^{bca}- \check{S}^{acb} + \check{S}^{bac} +\tfrac{1}{2} \check{S}^{ c r}{}_{r} g^{a b}.
\end{align}
Notice that
 $\lambda=0 \Rightarrow S^{(bc)a} =0$.  Stated in another way, the theory with Lagrangian $\mathcal{L}_0$ is only compatible with a matter action having vanishing shear hypermomentum. In a sense it
 {\em predicts } the observed absence of shear hypermomentum. It is now clear that we just  extended the mechanism already present in Sanberg's theory for $S^h$. The equation $S^{(bc)a} =0$ can be automatically satisfied by the matter analog of Eq.\ (\ref{odd}) once the matter Lagrangian is  invariant under the amplified symmetry.
  Whether this
 can be physically realized will be discussed in section
 \ref{sec:standard-model}.

\subsection{Dynamic equivalence of the $\mathcal{L}_0$ theory
with matter and Einstein-Cartan theory}
\label{subsec:ec-equivalence}
In this subsection we are going to show that the dynamical equations just obtained
for the $\mathcal{L}_0$ theory coupled with matter are indeed equivalent to the
Einstein-Cartan theory as long as the amplified symmetry is carried over to the matter sector.
Let us start by proving a generalization of
Remark \ref{viq}
\begin{theorem}\label{viq-ampl}
The canonical torsion
\begin{equation} \label{box}
T^a{}_{bc} -\mathcal{G}^a{}_{[bc]}
\end{equation}
and its trace  $T^s{}_{sc}- \mathcal{G}^a{}_{[ac]}$ are  amplified symmetry invariant.
The  quantities $\mathcal{L}_{\lambda}$,  $G_{ab}$ and $\tau_{ab}$ are all amplified symmetry invariant for $\lambda=0$ and projectively invariant for $\lambda\ne 0$. Moreover, if the equation  $\frac{\delta \mathcal{L}_{\lambda}}{\delta \omega^a{}_{ b}}=0$ holds (for $\lambda=0$ it is equivalent to the vanishing of  the canonical torsion) then $G_{ab}=\tau_{ab}$ and this tensor coincides with the usual general relativistic Einstein tensor constructed from the Levi-Civita connection.
\end{theorem}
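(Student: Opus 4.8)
The plan is to verify each invariance statement separately by computing how the relevant quantities transform under \eqref{kad}, and then to establish the final ``on-shell'' identity $G_{ab}=\tau_{ab}$ by combining the invariance with the diffeomorphism/$GL$ identities \eqref{ckf} and \eqref{toz}. First I would treat the canonical torsion. Under $\omega^a{}_b\to\omega^a{}_b+A^a{}_b$ with $A_{ab}=A_{ba}$ one has $T^a\to T^a+A^a{}_b\wedge e^b$ from \eqref{vid}, and $(\dd^\nabla g)_{ab}=\mathcal{G}_{ab}\to\mathcal{G}_{ab}-A_{ab}-A_{ba}=\mathcal{G}_{ab}-2A_{ab}$, so $\tfrac12 g^{ar}(\dd^\nabla g)_{rb}\wedge e^b\to\tfrac12 g^{ar}(\mathcal{G}_{rb}-2A_{rb})\wedge e^b$. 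Adding, the $A$-terms cancel and $\tilde T^a=T^a+\tfrac12 g^{ar}(\dd^\nabla g)_{rb}\wedge e^b$ is invariant; in components this is exactly $T^a{}_{bc}-\mathcal{G}^a{}_{[bc]}$, cf.\ \eqref{vog}. Taking the trace over $a$ and $c$ gives the invariance of $T^s{}_{sc}-\mathcal{G}^a{}_{[ac]}$. This step is short and is essentially contained in Proposition \ref{prop:metric-connection}.

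Next I would handle $\mathcal{L}_\lambda$, $G_{ab}$ and $\tau_{ab}$. For $\lambda=0$ the cleanest route is the one already flagged after \eqref{bta}: in the second line of \eqref{bta} one may replace $\mathcal{G}$ by $\check{\mathcal{G}}$ (the trace-$\mathcal{G}$ pieces drop by the index structure), and more to the point the combination $\dd^\nabla g^{-1}\wedge\dd^\nabla g$ appearing in the first line of \eqref{bta} transforms so that for $\lambda=0$ the $A$-dependence of the non-metricity-squared term precisely cancels the $A$-dependence coming from $\mathcal{R}$; this is the content of Appendix I, which I may cite. Concretely, $\mathcal{R}^a{}_b\to\mathcal{R}^a{}_b+\dd^\nabla A^a{}_b+A^a{}_c\wedge A^c{}_b$ and $g^{-1}\dd^\nabla g\to g^{-1}\dd^\nabla g-2A$, and plugging into $\tfrac18\eta_{abcd}[(1-\lambda)(g^{-1}\dd^\nabla g)\wedge(g^{-1}\dd^\nabla g)+4\mathcal{R}]^a{}_s g^{sb}\wedge e^c\wedge e^d$ one checks that the linear and quadratic $A$-terms organize into $(1-\lambda)$ times something plus a total derivative, so they vanish identically when $\lambda=0$. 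Since $\mathcal{L}_0$ is invariant, so is its variation with respect to $g_{ab}$ (giving invariance of $G_{ab}$) and its variation with respect to $e^a$ (giving invariance of $\tau_d$, hence of $\tau_{ab}$ via \eqref{woa}); this is the same argument used in Remark \ref{viq}. For $\lambda\ne0$ the same substitution shows the $A$-terms survive but the purely projective part ($A^a{}_b=\delta^a{}_b\alpha$) still cancels because then $g^{-1}\dd^\nabla g\to g^{-1}\dd^\nabla g-2\alpha\,\mathbb{1}$ feeds into $\check{\mathcal{G}}$, which is projectively invariant, as already noted; so $\mathcal{L}_\lambda$, $G_{ab}$, $\tau_{ab}$ remain projectively invariant for all $\lambda$.

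Finally, the identity $G_{ab}=\tau_{ab}$ on the solution set of $\delta\mathcal{L}_\lambda/\delta\omega^a{}_b=0$. Here I would invoke \eqref{toz} applied to $\mathcal{L}_\lambda$: it reads $\frac{\delta\mathcal{L}_\lambda}{\delta e^a}\wedge e^b=2\frac{\delta\mathcal{L}_\lambda}{\delta g_{rb}}g_{ra}+\dd^\nabla\frac{\delta\mathcal{L}_\lambda}{\delta\omega^a{}_b}$. When $\frac{\delta\mathcal{L}_\lambda}{\delta\omega^a{}_b}=0$ the last term drops, and using $\frac{\delta\mathcal{L}_\lambda}{\delta e^a}=2\tau_a$ and $\frac{\delta\mathcal{L}_\lambda}{\delta g_{rb}}=-G^{rb}\dd vol$ this becomes $2\tau_a\wedge e^b=-2G_{a}{}^{b}\dd vol$, i.e.\ $\tau_a\wedge e^b=-G_a{}^b\dd vol$; comparing with $\tau_p\wedge e^q=-\tau_p{}^q\dd vol$ gives $\tau_{ab}=G_{ab}$ (in particular $\tau_{ab}$ is then symmetric). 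It remains to argue that this common tensor is the standard Levi-Civita Einstein tensor. For $\lambda=0$, $\frac{\delta\mathcal{L}_0}{\delta\omega^a{}_b}=0$ splits (as in subsection \ref{subsec:dynamical}) into \eqref{bbh} with vanishing right-hand side, i.e.\ the canonical torsion $T^a{}_{bc}-\mathcal{G}^a{}_{[bc]}=0$, together with $S^{(mn)h}$-type equations with zero source forcing $\check{\mathcal{G}}=0$ and, via \eqref{vob}-type relations, the vanishing of the full non-metricity up to a projective transformation; the amplified symmetry then lets us pick the canonical metric representative $\tilde\omega$ of Proposition \ref{prop:metric-connection}, which has vanishing torsion and vanishing non-metricity, hence is Levi-Civita. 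On that representative the Riemann tensor has its usual symmetries, so $\tfrac12(R_b{}^r{}_{ar}+R^r{}_{bra})=R_{(ab)}=R_{ab}$ and the extra $\mathcal{G}$-quadratic term in \eqref{vka} vanishes, giving $\tau_{ab}=R_{ab}-\tfrac12 Rg_{ab}$, the Levi-Civita Einstein tensor; and $G_{ab}$ agrees by the same specialization of \eqref{vkb}. For $\lambda\ne0$ the argument is the one already sketched at the end of subsection \ref{subsec:hypermomentum} for $\mathcal{L}_{EH}$, carried over verbatim since the extra non-metricity terms vanish once $\mathcal{G}=0$.

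The main obstacle I anticipate is purely computational: tracking the linear-in-$A$ and quadratic-in-$A$ terms in the variation of the first line of \eqref{bta} and confirming that they assemble into a multiple of $(1-\lambda)$ plus an exact form, so that they cancel exactly at $\lambda=0$. This is the heart of the ``amplified invariance'' of $\mathcal{L}_0$ and is what Appendix I is devoted to; the rest of the theorem is then a sequence of short consequences (variation commutes with a symmetry, the algebraic identity \eqref{toz}, and the specialization of \eqref{vka}-\eqref{vkb} to a torsion-free metric connection). A secondary subtlety is the bookkeeping of the projective case: one must be careful that replacing $\mathcal{G}$ by $\check{\mathcal{G}}$ in the second line of \eqref{bta} is legitimate, i.e.\ that the discarded trace pieces genuinely cancel in the contractions $\mathcal{G}^{rs}{}_s\mathcal{G}_r{}^t{}_t$ and $\mathcal{G}^{uvz}\mathcal{G}_{uzv}$ up to an overall projectively trivial shift, but this is a routine index check.
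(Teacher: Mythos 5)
Your proposal follows essentially the same route as the paper's proof: invariance of the canonical torsion computed directly from \eqref{vid}, invariance of $\mathcal{L}_0$ via the Appendix I computation, invariance of $G_{ab}$ and $\tau_{ab}$ as variational derivatives of an invariant Lagrangian, the identity $G_{ab}=\tau_{ab}$ from \eqref{toz} once $\delta\mathcal{L}_\lambda/\delta\omega^a{}_b=0$, and the reduction to the Levi-Civita Einstein tensor by passing to the canonical metric representative for $\lambda=0$, respectively by a projective change killing the non-metricity for $\lambda\ne0$. One intermediate claim is off, though harmlessly so: for $\lambda=0$ the symmetric part of the connection variation vanishes identically, cf.\ \eqref{odd}, so the field equations do \emph{not} force $\check{\mathcal{G}}=0$ --- the non-metricity is left undetermined, being pure gauge under the amplified symmetry; that step is redundant in your argument, since the vanishing of the canonical torsion from \eqref{bbh} already makes the canonical metric representative Levi-Civita, which is exactly the paper's argument and the one you invoke immediately afterwards. (A minor further imprecision: in Appendix I the leftover linear-in-$A$ terms are annihilated by symmetry against $\eta_{abcd}$ rather than assembling into a total derivative, but since you defer to the appendix this does not affect anything.)
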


\begin{proof}
The amplified (or projective) symmetry invariance of $G_{ab}$ and $\tau_{ab}$ follows from that of $
\mathcal{L}_{\lambda}$ since they are variational derivatives with respect to the metric and co-basis
respectively. We also checked it through direct calculation. The invariance of (\ref{box}) follows
easily from (\ref{vid}). The identity $G_{ab}=\tau_{ab}$ follows from Eq.\ (\ref{toz}) applied to $
\mathcal{L}_\lambda$. If $ \lambda=0$ as Eq.\ (\ref{bbh}) holds, the canonical torsion va\-nish\-es,
which means that the canonical metric representative is the Levi-Civita connection. By the amplified
symmetry,  $G_{ab}$ can be calculated using the Levi-Civita connection, in which case its expression
reduces to the standard one due to the symmetries of the  Riemann curvature for the Levi-Civita
connection. If $\lambda \ne 0$ we have only projective invariance but from Eq.\ (\ref{vpo}) we see that
the shear non-metricity vanishes. Through a projective change, cf.\ Eq.\ (\ref{vob}), we get a
connection for which the whole non-metricity vanishes while $G_{ab}$ remains the same. The Eq.\
(\ref{bbh}) tells us that $
T^h{}_{ab}-  \mathcal{G}^h{}_{[ab]} =0$ where the expression on the left-hand side is projectively
invariant, thus the new representative has also vanishing torsion thus it is the Levi-Civita connection.
 \end{proof}
We are now ready to prove the promised equivalence to the Einstein-Cartan theory.
\begin{theorem}\label{theo:einstein-cartan-amplified}
If $\mathcal{L}_M$ is invariant under the amplified symmetry then,
for $\lambda=0$, the dynamics of the theory described by the
Lagrangian $\mathcal{L}=\mathcal{L}_0 + \mathcal{L}_M$ is equivalent to
the dynamics described by the Einstein-Cartan theory with $\mathcal{L}_M$ as
matter content.
\end{theorem}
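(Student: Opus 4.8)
The plan is to write out the Euler--Lagrange equations of $\mathcal{L}=\mathcal{L}_0+\mathcal{L}_M$, to discard those made redundant by the amplified invariance, and then to show that passing to the canonical metric representative of Proposition~\ref{prop:metric-connection} turns the surviving system \emph{verbatim} into the Einstein--Cartan system. Varying $\mathcal{L}$ with respect to $\omega^a{}_b$, $g_{ab}$, $e^a$ and the matter fields $\phi$ produces four sets of equations; arguing as in Section~\ref{subsec:hypermomentum} (the diffeomorphism identity \eqref{ckf} and the $GL$ identity \eqref{toz}, the latter valid for $\mathcal{L}_0$ and, on the matter shell, for $\mathcal{L}_M$) an equivalent independent set consists of the two parts of the connection equation, the co-basis equation \eqref{cis}, and the matter equations $\delta\mathcal{L}_M/\delta\phi=0$.

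First I would dispose of the symmetric part of the connection equation. Since $\mathcal{L}_M$ is amplified invariant, the infinitesimal version of \eqref{kad} yields, exactly as \eqref{odd} does for $\mathcal{L}_0$, the matter identity $\delta\mathcal{L}_M/\delta\omega_{(ab)}=0$, i.e.\ $S^{(mn)h}=0$ (in line with the remark that $\lambda=0$ forces $S^{(bc)a}=0$ made below \eqref{vpo}). Hence $\delta\mathcal{L}/\delta\omega_{(ab)}=\delta\mathcal{L}_0/\delta\omega_{(ab)}+\delta\mathcal{L}_M/\delta\omega_{(ab)}=0$ holds identically and carries no information, so the only connection equation with content is the antisymmetric one \eqref{bbh}, whose left-hand side is the canonical torsion $T^h{}_{ab}-\mathcal{G}^h{}_{[ab]}$.

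Next, starting from an arbitrary solution $(g,\omega,\phi)$, I would act with the amplified transformation \eqref{kad} with $A_{ab}=\tfrac{1}{2}(\dd^\nabla g)_{ab}$; by \eqref{vvp} this replaces $\omega$ with its canonical metric representative $\tilde\omega$, for which $\mathcal{G}_{ab}=0$. Because $\mathcal{L}_0$ and $\mathcal{L}_M$ are both amplified invariant and the transformation changes neither $g$, nor $e^a$, nor $\phi$, the amplified symmetry maps solutions to solutions, so $(g,\tilde\omega,\phi)$ solves the same system with the $g$-, $e^a$- and $\phi$-equations unchanged in form. In this gauge $\mathcal{L}_0=R\,\dd vol=\mathcal{L}_{EH}$ by \eqref{bta}; moreover every $(1-\lambda)$-correction appearing in $\tau_{ab}$ of \eqref{vka} and in $G^{mn}$ of \eqref{vkb} carries an explicit factor of $\mathcal{G}$ (the canonical-torsion term in \eqref{vkb} being multiplied by $\mathcal{G}^{a(m}{}_p$), so at $\mathcal{G}=0$ these collapse to the Einstein--Hilbert Hilbert form and Einstein tensor; \eqref{bbh} becomes $g_{hs}\tilde T^s{}_{ab}=-S_{[ab]h}-\tfrac{1}{2} S_{[ra]}{}^r g_{hb}+\tfrac{1}{2} S_{[rb]}{}^r g_{ha}$, the Einstein--Cartan torsion equation; \eqref{cis} becomes the Einstein--Cartan gravitational equation; and the matter equations are those of $\mathcal{L}_M$ coupled to a metric connection whose torsion is $\tilde T^a$ of \eqref{vog}. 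These are exactly the field equations of Einstein--Cartan theory with matter content $\mathcal{L}_M$.

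Conversely, any solution $(g,\tilde\omega,\phi)$ of Einstein--Cartan theory with matter $\mathcal{L}_M$ has $\mathcal{G}=0$, so $\mathcal{L}_0=\mathcal{L}_{EH}$ on it; the antisymmetric connection equation is the Einstein--Cartan torsion equation, the symmetric one is trivially satisfied ($\delta\mathcal{L}_0/\delta\omega_{(ab)}=0$ and $S^{(mn)h}=0$), \eqref{cis} is the Einstein--Cartan gravitational equation, and the matter equations agree; acting with the inverse of \eqref{kad} then produces the whole amplified-equivalence class of solutions of $\mathcal{L}_0+\mathcal{L}_M$ built over it. Thus the solution spaces correspond modulo the amplified gauge symmetry, which is the asserted dynamical equivalence. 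The step I expect to be the main obstacle is the reduction in the third paragraph: verifying that at $\mathcal{G}=0$ the generalized quantities $G^{mn}$ and $\tau_{ab}$ of the $\mathcal{L}_0$ theory really reduce to the standard Einstein--Cartan Einstein tensor and Hilbert form --- this needs the explicit structure of \eqref{vka}--\eqref{vkb} (each correction being proportional to non-metricity) together with, for $G_{ab}$, the amplified invariance of Theorem~\ref{viq-ampl}, which permits evaluating it on the metric representative, where the curvature symmetries of a metric-compatible connection bring it to the familiar expression.
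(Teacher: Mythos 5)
Your proposal is correct and follows essentially the paper's own argument: the amplified invariance of $\mathcal{L}_M$ renders the symmetric part of the connection equation vacuous ($S^{(mn)h}=0$, matching \eqref{vpo0}--\eqref{vpo1} at $\lambda=0$), and passing to the canonical metric representative of Proposition \ref{prop:metric-connection} reduces \eqref{bbh}, \eqref{cis}, $\tau_{ab}$ and $G_{ab}$ to their Einstein--Cartan counterparts, your active gauge-transformation of solutions being just a more explicit rendering of the paper's use of the amplified invariance of these quantities (Theorem \ref{viq-ampl}). One small imprecision: not every $(1-\lambda)$-correction in \eqref{vkb} carries an explicit factor of $\mathcal{G}$ (the terms $R^{(am)}{}_{a}{}^{n}+R^{(an)}{}_{a}{}^{m}$ do not), but they vanish once $\mathcal{G}=0$ by the curvature antisymmetry following from \eqref{vok}, which you do invoke in your final paragraph, so no genuine gap results.
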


\begin{proof}
Observe that for $\lambda=0$, $\tau_{ab}$ and $G_{ab}$ are amplified symmetry invariant
and thus, according to Proposition \ref{prop:metric-connection},
we can choose to evaluate them on the canonical metric representative.
By the additional curvature
symmetry which follows
from Eq.\ (\ref{vok}) the expressions for these tensors (\ref{vka})-(\ref{vkb}) simplify to those for
metric compatible
connections as in the Einstein-Cartan theory (\ref{bka})-(\ref{bkb})
and therefore, we conclude that the dynamical equations that involve these tensors
agree with their Einstein-Cartan theory counterparts.
As for the dynamical equations for the affine connection, Eq.\ \eqref{bbh} establishes a relationship between the torsion of the canonical metric connection and the spin density which is precisely that of the Einstein-Cartan theory. Equations  \eqref{vpo0}-\eqref{vpo1} are satisfied (recall that $\lambda=0$) because the matter Lagrangian is demanded to be amplified symmetry invariant and hence $S^{(ab)c}=0$.
\end{proof}

In conclusion, Theorem \ref{theo:einstein-cartan-amplified} tells us that
the $\mathcal{L}_0$ gravitational theory coupled to matter invariant under the amplified
symmetry becomes equivalent to the Einstein-Cartan theory while being  really metric-affine.

\section{Invariance of the Standard Model under amplified symmetry}
\label{sec:standard-model}
Let us discuss whether the amplified symmetry is satisfied by the matter Lagrangian
of the Standard Model.
So far the basis $\{e_a\}$ was arbitrary, but for discussion of the Dirac equation  it will be
convenient to work with an orthonormal basis, that is $g_{ab}= \eta_{ab}$.
The next result shows that the decomposition (\ref{coa}) of the variation of the connection can be
introduced for the connection itself provided we stay in orthonormal bases (frames).
\begin{lemma}\label{lem:connection}
Let $\nabla$ be a linear connection on $TM\to M$ and let us denote with $\omega^a{}_{ b}$  the
connection 1-form potentials on $M$ in possibly anholonomic bases. Let $g$ be a metric on $M$
(neither compatibility of $\omega^a{}_{ b}$ with the metric nor vanishing torsion are  assumed). Then
for changes between $g$-orthonormal bases, the 1-forms $g^{a m} \omega_{[mb]}$ transform as potentials
for (and hence define) a linear connection which is compatible with the metric while $\omega_{(ab)}$
transforms as a tensor.
\end{lemma}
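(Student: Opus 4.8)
The plan is to imitate the strategy behind Remark~\ref{viq}: track the inhomogeneous term in the transformation law of the connection potentials and observe that, for changes between \emph{$g$-orthonormal} frames, that term is forced to be antisymmetric, so it affects only the antisymmetric part $\omega_{[ab]}$ while leaving $\omega_{(ab)}$ transforming tensorially. This is exactly what makes the decomposition \eqref{coa} meaningful at the level of the connection itself.

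First I would recall that under a change of co-basis $e'{}^a=\Lambda^a{}_b\,e^b$ the relation \eqref{vaf} imposes the familiar rule
\[
\omega'{}^a{}_b=\Lambda^a{}_c\,\omega^c{}_d\,(\Lambda^{-1})^d{}_b+\Lambda^a{}_c\,\dd(\Lambda^{-1})^c{}_b .
\]
If both bases are $g$-orthonormal then $g_{ab}$ has the same constant value in each, so the transition matrix lies in $O(g)$, i.e.\ $g_{cd}(\Lambda^{-1})^c{}_a(\Lambda^{-1})^d{}_b=g_{ab}$ with $\dd g_{ab}=0$. Lowering the contravariant index with this constant metric, $\omega_{ab}:=g_{ac}\omega^c{}_b$, and using the orthogonality relation to commute $g$ past $\Lambda$, the rule becomes
\[
\omega'_{ab}=(\Lambda^{-1})^c{}_a(\Lambda^{-1})^d{}_b\,\omega_{cd}+\Theta_{ab},\qquad
\Theta_{ab}:=g_{cd}\,(\Lambda^{-1})^c{}_a\,\dd(\Lambda^{-1})^d{}_b .
\]
Differentiating $g_{cd}(\Lambda^{-1})^c{}_a(\Lambda^{-1})^d{}_b=g_{ab}$ and using $\dd g_{ab}=0$ gives $\Theta_{ab}+\Theta_{ba}=0$, so the entire pure-derivative contribution lies in the antisymmetric slot.

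The conclusion then splits in two. Taking the symmetric part, $\Theta$ drops out and $\omega'_{(ab)}=(\Lambda^{-1})^c{}_a(\Lambda^{-1})^d{}_b\,\omega_{(cd)}$, so $\omega_{(ab)}$ transforms as a $1$-form valued covariant $2$-tensor (and hence so do its trace and trace-free parts, defined through the invariant $g$). Taking the antisymmetric part, $\omega'_{[ab]}=(\Lambda^{-1})^c{}_a(\Lambda^{-1})^d{}_b\,\omega_{[cd]}+\Theta_{ab}$; raising one index with $g$ and again using orthogonality to move $g$ through $\Lambda$ and $\dd(\Lambda^{-1})$, one checks that $\tilde\omega^a{}_b:=g^{am}\omega_{[mb]}$ obeys precisely $\tilde\omega'{}^a{}_b=\Lambda^a{}_c\,\tilde\omega^c{}_d\,(\Lambda^{-1})^d{}_b+\Lambda^a{}_c\,\dd(\Lambda^{-1})^c{}_b$, i.e.\ the $\tilde\omega^a{}_b$ are genuine connection potentials, defining a linear connection $\tilde\nabla$. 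Finally, its potentials with both indices lowered are $\tilde\omega_{ab}=g_{ac}\tilde\omega^c{}_b=\omega_{[ab]}$, antisymmetric in every orthonormal frame; since in such a frame $\dd g_{ab}=0$ and $\tilde\omega_{ab}+\tilde\omega_{ba}=0$, we get $\dd^{\tilde\nabla}g_{ab}=\dd g_{ab}-\tilde\omega_{ab}-\tilde\omega_{ba}=0$, and as $\dd^{\tilde\nabla}g_{ab}$ is tensorial this holds in all frames, so $\tilde\nabla$ is metric compatible.

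I do not expect a genuine obstacle here: the sole conceptual input is that transition matrices between $g$-orthonormal frames lie in $O(g)$, which is exactly what confines the inhomogeneous term to the antisymmetric part. The one delicate point is the index bookkeeping in the last step — commuting the constant metric past $\Lambda$ and $\dd(\Lambda^{-1})$ carefully enough to recognise the law satisfied by $g^{am}\omega_{[mb]}$ as the \emph{full} connection transformation rule, and not merely as ``tensor plus gauge term''.
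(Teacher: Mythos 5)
Your proof is correct and follows essentially the same route as the paper: you write the frame-change law for $\omega_{ab}$, use the constancy of the orthonormal metric components (equivalently, that transition matrices lie in $O(g)$) to confine the inhomogeneous term to the antisymmetric slot, and conclude metric compatibility of $\tilde\nabla$ from $\dd^{\tilde\nabla}g_{ab}=\dd g_{ab}-\omega_{[ab]}-\omega_{[ba]}=0$, exactly as in the paper's argument. The only cosmetic difference is that the paper carries the $\dd g_{ab}$, $\dd\bar g_{ab}$ terms through a general-frame computation before setting them to zero, while you impose orthonormality from the start and make the antisymmetry of $\Theta_{ab}$ explicit.
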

Thus the theorem tells us that, given a metric, it makes sense to speak of the antisymmetric and
symmetric parts of a connection and it gives a simple way to calculate these parts  in an orthonormal
basis. The components in a non-orthonormal basis are not so easily computed starting from
$\omega^a{}_{ b}$.
\begin{proof}
Under the change of basis
\begin{align*}
\bar e^b&=  G^b{}_a e^a,\\
\bar g_{as}&= g_{ut} (G^{-1})^u{}_a  (G^{-1})^t{}_s,\\
\bar \omega^a{}_{ b}&=  G^a{}_r  \dd (G^{-1})^r{}_b+  G^a{}_r \omega^r{}_{  s}  (G^{-1})^s{}_b .
\end{align*}
Thus
\[
\bar \omega_{ a b}=(G^{-1})^u{}_a   g_{ur}  \dd (G^{-1})^r{}_b+    (G^{-1})^u{}_a  \omega_{u s}  (G^{-1})^s{}_b,
\]
from which we get
\begin{align*}
\bar \omega_{ [a b]}&= (G^{-1})^u{}_a   g_{ur}  \dd (G^{-1})^r{}_b+(G^{-1})^u{}_a  \omega_{[u s]}   (G^{-1})^s{}_b \\
&\quad + \tfrac{1}{2} (G^{-1})^u{}_b   (\dd g_{ur})  \dd (G^{-1})^r{}_a -\tfrac{1}{2}\dd \bar g_{ab},\\
\bar \omega_{ (a b)}&= (G^{-1})^u{}_a  \omega_{(u s)}   (G^{-1})^s{}_b+\tfrac{1}{2} \dd\bar g_{ab}\\
&\quad -\tfrac{1}{2}(G^{-1})^u{}_b   (\dd g_{ur})  \dd (G^{-1})^r{}_a .
\end{align*}
We obtain the desired conclusion using $\dd g_{ur} =\dd \bar g_{ab}=0$. The fact that the antisymmetric part defines a metric connection $\tilde \nabla$ is clear from $(\dd^{\tilde \nabla} g)=\dd g_{ab}-\omega_{[ab]}-\omega_{[ba]}=0$.
\end{proof}

\begin{theorem}\label{theo:standard-model}
The matter Lagrangian of the Standard Model is invariant under the amplified
symmetry.
\end{theorem}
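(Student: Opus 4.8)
The plan is to run through the Standard Model matter Lagrangian term by term and locate the only place where the independently varied affine connection $\omega^a{}_b$ can possibly enter, namely the fermion kinetic (Dirac) sector. First I would note that the Yang--Mills sector is built from the internal field strengths $F=\dd A+A\wedge A$ and coupled to gravity only through the Hodge dual of the metric; being a statement about $2$-forms it carries no dependence on the spacetime affine connection at all (the standard minimal-coupling prescription for $p$-forms). In the same way, the Higgs kinetic term $g^{\mu\nu}(D_\mu\phi)^\dagger D_\nu\phi$ involves only the \emph{internal} gauge-covariant derivative of a spacetime scalar, while the scalar potential $V(\phi)$ and the Yukawa terms $\bar\psi\,\phi\,\psi$ are purely algebraic, and the measure $\dd vol$ is metric. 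None of these objects sees $\omega^a{}_b$, so each is trivially invariant under \eqref{kad}.

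The substantive piece is the fermionic Lagrangian $\tfrac{i}{2}\bar\psi\,\gamma^a e_a{}^\mu D_\mu\psi+\mathrm{h.c.}$, summed over the chiral multiplets and with $D_\mu$ also containing the internal gauge fields (irrelevant here). The key point --- for which I would invoke Lemma \ref{lem:connection} in a $g$-orthonormal frame --- is that the spacetime part of the spinor covariant derivative reads $D_\mu\psi\supset\tfrac14\,\omega_{\mu ab}\,\gamma^{ab}\psi$ with $\gamma^{ab}=\tfrac12[\gamma^a,\gamma^b]$ antisymmetric in $a,b$, so only the antisymmetric combination $\omega_{\mu[ab]}$ enters; and by Lemma \ref{lem:connection} this combination is precisely the potential of a metric-compatible connection, which furnishes the correct and well-defined coupling of spinors to a general affine connection. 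Since the amplified transformation \eqref{kad} shifts $\omega$ by $A_{ab}=A_{ba}$, it leaves $\omega_{[ab]}$ untouched, $\omega'_{[ab]}=\omega_{[ab]}$; hence the Dirac Lagrangian --- including any term generated by the symmetrization required for Hermiticity, which is likewise built solely from $\omega_{[ab]}$ --- is unchanged, and so is the entire Standard Model matter Lagrangian.

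The main obstacle is conceptual rather than computational: one must settle how Dirac fields are minimally coupled to an affine connection that is in general neither metric nor torsion-free, because inserting $\omega^a{}_b$ naively into a $\mathrm{Spin}(1,3)$ covariant derivative is meaningless (spinors carry a representation of the Lorentz group, not of $GL(4,\mathbb R)$). Lemma \ref{lem:connection} resolves this: in an orthonormal frame the antisymmetric part $g^{am}\omega_{[mb]}$ is a genuine metric connection and is the object that must be fed into the Dirac operator; once this is fixed, invariance under \eqref{kad} is immediate because the amplified shift is purely symmetric. A secondary check is that no other term of the Standard Model smuggles in a dependence on $\omega^a{}_b$ --- the $\theta$-term $\mathrm{tr}(F\wedge F)$ is topological and hence connection-independent, and gauge-fixing and ghost terms belong to the quantization, not to the classical matter action considered here.
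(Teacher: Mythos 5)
Your proposal is correct and follows essentially the same route as the paper: the only appearance of the spacetime connection in the Standard Model matter Lagrangian is in the fermionic kinetic terms, where the antisymmetry of $\sigma_{ab}$ means only $\omega_{\mu[ab]}$ enters, which by Lemma \ref{lem:connection} is a bona fide metric connection in orthonormal frames and is untouched by the symmetric shift \eqref{kad}. Your more explicit term-by-term check of the gauge, Higgs, Yukawa and topological sectors is a welcome amplification of the paper's brief assertion that those pieces do not see $\omega^a{}_b$, but it is not a different argument.
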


\begin{proof}
In the bundle of orthonormal bases,
the  kinetic term of the Dirac Lagrangian is given by \cite{cho78,hammond02}
\begin{equation}
 \mathcal{L}_T=k \sqrt{-g} [  \tfrac{i}{2} (  \Psi \gamma^a e^\mu{}_a D_\mu \Psi - \overline{D_\mu\Psi} \gamma^a e^\mu{}_a  \Psi)],
\label{eq:dirac-lagrangian}
\end{equation}
 where $k$ is a proportionality constant and where
 \[
 D_\mu = \partial_\mu +\tfrac{1}{2} \omega_{\mu}{}^{ab} \sigma_{ab}, \qquad  \sigma_{ab}=-\tfrac{1}{4} \left[\gamma_{a},\gamma_{b}\right] .
  \]
Notice that we displayed the 1-form index of the connection
according to the convention $\omega^a{}_{b}=\omega_\mu{}^a{}_{b}dx^\mu$.
Clearly $\mathcal{L}_T$ satisfies the amplified
symmetry because $\sigma_{ab}=-\sigma_{ba}$ so that $\omega_{\mu}{}^{ab}$ can actually be replaced by
$\omega_{\mu}{}^{[ab]}$ which, by Lemma \ref{lem:connection}, can be regarded as a connection in the bundle of
orthonormal bases.
Now, in the matter Lagrangian of the Standard Model, the spacetime connection $\omega^a{}_{ b}$ only appears
explicitly in
the kinetic term of each fermionic field, which adopts the form
given by (\ref{eq:dirac-lagrangian}) and therefore the previous argument applies for each of them.
Having confirmed that the part of the Standard Model matter Lagrangian that contains the spacetime connection is
invariant under the amplified symmetry, we conclude that the full matter Lagrangian of the Standard Model
enjoys such invariance.
\end{proof}

From the previous proof we deduce, using a standard argument about the invariance of the
action that, provided the Dirac Lagrangian is generalized as done in this section,
$\frac{\delta \mathcal{L}_{M}}{\delta \omega_{ (m n)}}=0$
which from Eq.\ (\ref{voc}) reads: $S^h=0$ and $\check S^{abc}=0$.
 Therefore, we have demonstrated that, in the context of the geometry used in this work (i.e.\ ignoring infinite dimensional representations of the affine group \cite{MAG1977,MAG1978,MAG1979}), there cannot
be shear hypermomentum in the Standard Model, regardless of the choice of
the gravitational Lagrangian. For this type of result see also \cite{jimenez20,MAG1977,MAG1978,MAG1979}.
Likewise, it extends  the result of Theorem \ref{theo:einstein-cartan-amplified}
to  situations in which $\mathcal{L}_M$ corresponds to the Lagrangian of the
Standard Model. Our predictions agree with the predictions of the Einstein-Cartan
theory as reviewed in subsection \ref{subsec:remove-projective} but the mechanism
employed is totally different:  the $\mathcal{L}_0+\mathcal{L}_M$ theory with
$\mathcal{L}_M$ the Lagrangian of the Standard Model is genuinely metric-affine and leads to a vanishing shear hypermomentum, whereas the Einstein-Cartan theory, by imposing vanishing non-metricity a priory, is not genuinely metric-affine (it does not support  shear hypermomentum as this quantity cannot be defined in this theory). Of course, in both theories, in contrast to metric-affine theories based on different Lagrangians, one is not bothered with the problem of identifying matter with non-vanishing shear hypermomentum.

\section{Conclusions}

The very foundations of differential geometry suggest that the metric and the
connection should be regarded as independent variables. This point of
view, pioneered by Palatini and Einstein, naturally leads to the consideration of  general metric-affine theories
in which no constraint
on the connection is imposed. The Einstein-Cartan theory is not metric-affine since the compatibility with the
metric is demanded  from
the outset. However, we have shown that in the metric-affine theory based on the Einstein-Hilbert action the shear
non-metricity depends
solely on the shear hypermomentum and that the latter vanishes for the Standard Model of particle physics (and
hence for any reasonable
matter Lagrangian) due to the fact that the Standard Model Lagrangian enjoys the {\em amplified symmetry}, which is
a convenient generalization of the projective symmetry.
This means that the Einstein-Cartan dynamics can be obtained from a metric-affine theory based on the Einstein-
Hilbert action.

Nevertheless, it is a bit upsetting that the gravitational sector in such a metric-affine approach could be
consistent with a non-vanishing shear hypermomentum, for it appears to be  just a chance that the matter Lagrangian
does not induce shear hypermomentum. A more satisfactory approach is obtained through a modification of the
gravitational sector such that the gravitational action too becomes amplified symmetry invariant. In this way the
amplified symmetry can be regarded as a true symmetry of nature.
This new metric-affine theory is still equivalent to Einstein-Cartan’s but now it predicts the
vanishing of the shear hypermomentum so that the interpretational issues connected with a potential non-zero value of the last quantity are completely solved.

\section*{Acknowledgements}
The first author thanks the financial support from Grant 14-37086G and the
consecutive Grant 19-01850S of the Czech Science Foundation.
The second author thanks the Institute of Theoretical Physics of Charles University in Prague for kind hospitality.

\section*{Appendix I: Proof of the invariance under amplified symmetry of $\mathcal{L}_0$}
The matrix-valued 2-form
\[
W_{cd}=\eta_{abcd}  [4 \mathcal{R}-g^{-1} \dd^\nabla g  \wedge g^{-1} \dd^\nabla g]^a{}_{ s} g^{sb}
\]
enters the expression of the Lagrangian $\mathcal{L}_0$, indeed $\mathcal{L}_0=\frac{1}{8} W_{cd} e^c \wedge e^d$, cf.\ Eq.\ (\ref{bta}). In this section we check its invariance  under  the (amplified) changes
\begin{equation}
\omega^a{}_{b} \to \omega'{}^a{}_{ b}=\omega^a{}_{ b}+A^a{}_{ b}, \qquad A_{ab}=A_{ba}
\end{equation}
where $A^a{}_b$ is a matrix-valued 1-form. We shall also write $A= A^a{}_b e_a \otimes e^b$. The curvature transforms as follows
\begin{align*}
\mathcal{R}'{}^a{}_s=\mathcal{R}{}^a{}_s+\dd^\nabla A^a{}_s+A^a{}_c \wedge A^c{}_s
\end{align*}
where
\[
\dd^\nabla A^a{}_s=\dd A^a{}_s+\omega^a{}_c \wedge A^c{}_s+A^a{}_c \wedge \omega^c{}_s
\]
is the covariant exterior differential of $A^a{}_s$.
As for the change in the non-metricity we have
\begin{align*}
g^{ar}\dd^{\nabla '} \!\! g_{r b}&=g^{ar} \dd^\nabla \! g_{r b}-A_{b}{}^a-A^a{}_{  b}=g^{ar} \dd^\nabla\! g_{r b}-2A^{a}{}_b.
\end{align*}
 From the expression for $W_{cd}$  we see that the quadratic term in $A$ is cancelled  because an opposite contribution comes from the curvature (no such cancellation can take place for $\mathcal{L}_\lambda$, $\lambda\ne 0$), hence, after suppressing the indices,
 \[
 g^{-1}\dd^{\nabla '}g=g^{-1}\dd^\nabla g-2A
 \]
  and
\begin{align*}
(g^{-1}\dd^{\nabla'} g) \underset{\dot{}}{\wedge} (g^{-1}\dd^{\nabla '}g)& =(g^{-1}\dd^\nabla g) \underset{\dot{}}{\wedge} (g^{-1}\dd^\nabla g)+4 A \underset{\dot{}}{\wedge} A\\
&\quad -2 A \wedge g^{-1}\dd^\nabla g -2 g^{-1}\dd^\nabla g \wedge A.
\end{align*}
Substituting
\begin{align*}
&[4 \mathcal{R}'-(g^{-1}\dd^{\nabla '}g) \underset{\dot{}}{\wedge} (g^{-1}\dd^{\nabla '}g)]g^{-1} \\ &=[4\mathcal{R}-(g^{-1}\dd^\nabla g) \underset{\dot{}}{\wedge} (g^{-1}\dd^\nabla g)]g^{-1}+4 (\dd^\nabla  A)g^{-1}\\
&\quad +2 A \wedge g^{-1} (\dd^\nabla g) g^{-1} +2 g^{-1}\dd^\nabla g \wedge A g^{-1}\\
&=[4\mathcal{R}-(g^{-1}\dd^\nabla g) \underset{\dot{}}{\wedge} (g^{-1}\dd^\nabla g)]g^{-1}+4 (\dd^\nabla  A)g^{-1}\\
&\quad -2 A \wedge  \dd g^{-1}    +2 g^{-1} \dd^\nabla g  \wedge  A g^{-1}\\
&=[4\mathcal{R}-(g^{-1}\dd^\nabla g) \underset{\dot{}}{\wedge} (g^{-1}\dd^\nabla g)]g^{-1}+2 \dd^\nabla  (Ag^{-1})\\
&\quad +2 g^{-1} g  (\dd^\nabla  A)g^{-1}  +2 g^{-1} \dd^\nabla g  \wedge  A g^{-1}\\
&=[4\mathcal{R}-(g^{-1}\dd^\nabla g) \underset{\dot{}}{\wedge} (g^{-1}\dd^\nabla g)]g^{-1}+2 \dd^\nabla  (Ag^{-1})\\
&\quad +2 g^{-1}  \dd^\nabla (g  A)g^{-1}  .
\end{align*}
The last two terms are symmetric and so vanish after contraction with $\eta_{abcd}$. In conclusion, we proved the invariance
\[
W'{}_{cd}=W_{cd},
\]
and hence $\mathcal{L}'_0=\mathcal{L}_0$.


\end{document}